\ifdefined\pdfminorversion\pdfminorversion=7\fi
\pdfoutput=1
\documentclass[10pt,journal]{IEEEtran}
\usepackage{amsmath,amssymb,amsthm}
\usepackage{cite}
\usepackage{graphicx}
\usepackage{booktabs,array}
\usepackage[hidelinks]{hyperref}
\newtheorem{theorem}{Theorem}
\newtheorem{lemma}{Lemma}
\theoremstyle{definition}
\newtheorem{assumption}{Assumption}
\newtheorem{remark}{Remark}
\newcommand{\col}{\operatorname{col}}
\newcommand{\one}{\mathbf{1}}
\newcommand{\pgoal}{p_{\mathrm{goal}}}
\newcommand{\diag}{\operatorname{diag}}
\newcommand{\R}{\mathbb{R}}
\title{Chaotically Paced Transit of an Embedded-Leader Swarm with Local Spring-Damper Formation Control}
\author{Bo~Tu, Mohammad~Rasouli%
\thanks{B.~Tu and M.~Rasouli are with the Department of Electrical Engineering,
University of North Dakota, Grand Forks, ND~58202, USA
(e-mail: \texttt{\{bo.tu, mohammad.rasouli\}@und.edu}).}}
\begin{document}
\maketitle

\begin{abstract}
This paper treats planar swarm transit along a fixed route when no
ground station streams the reference in flight. One embedded leader
stores the deployment point and the destination, generates the
reference onboard, and drives its progress rate with a saturated
coordinate of a Chua oscillator; the saturation keeps the rate inside
a prescribed positive band, which gives explicit bounds on the
reference arrival time. This leader broadcasts the one scalar rate to
the other leaders; every leader adds it as a velocity feedforward and
holds its station by single-pinned consensus on position error, while
followers use only relative position and velocity feedback through a
spring-damper network. For kinematic leaders and double-integrator
followers on fixed graphs with ideal information exchange, the leader
errors decay exponentially, the follower errors are bounded, and the
whole swarm eventually stays inside the destination area whenever the
final formation fits strictly inside it. Ten simulations with 30
agents compare the chaotic rate with a constant rate. The reference
arrivals satisfy the analytical bounds, and the chaotic rate raises
the error of a constant-velocity predictor at all three tested
horizons, at the cost of a larger but bounded follower error. The
analysis does not address internal link failures or model-aware
observers.
\end{abstract}
\begin{IEEEkeywords}
Chaotic speed modulation, distributed formation control, embedded
leader, pinning consensus, spring-damper network, transit-time bounds,
UAV swarm.
\end{IEEEkeywords}

\section{Introduction}
\label{sec:intro}
UAV formations support surveillance, search, mapping, and delivery,
with distributed controllers coordinating the agents through local
information~\cite{chung2018aerial,coppola2020survey,ouyang2023formation,oh2015survey}.
In a remote-commanded implementation a ground station streams a
time-varying reference to one or more informed agents, so continued
operation depends on that link, which can be jammed or
intercepted~\cite{bhattacharya2010jamming,kang2020protect,sun2019physical}
(Fig.~\ref{fig:motivation}).
This dependence is a design choice rather than a necessity; autopilots
can also fly missions uploaded before flight~\cite{px4mission}.
Generating the reference onboard removes the streamed updates but does
not protect localization or the internal network, and
navigation-signal spoofing remains a separate failure
mode~\cite{kerns2014unmanned}.

The timing of the transit is a second design choice. Motion at
constant speed along a known route is easy to extrapolate with a
constant-velocity predictor~\cite{li2003survey}. A time-varying
progress rate increases the error of such a predictor while leaving
the route unchanged. Here one coordinate of a Chua oscillator supplies
the variation and a saturation keeps the rate inside a positive band;
the double-scroll motion yields aperiodic fast and slow
intervals~\cite{matsumoto1984chaotic,chua1986double}. Chaos gives no
secrecy guarantee, since predictability depends on what the observer
measures and knows. The question is whether such pacing can be
combined with a simple distributed formation controller and a
guaranteed reference transit time.

\begin{figure}[!t]
    \centering
    \includegraphics[width=\linewidth]{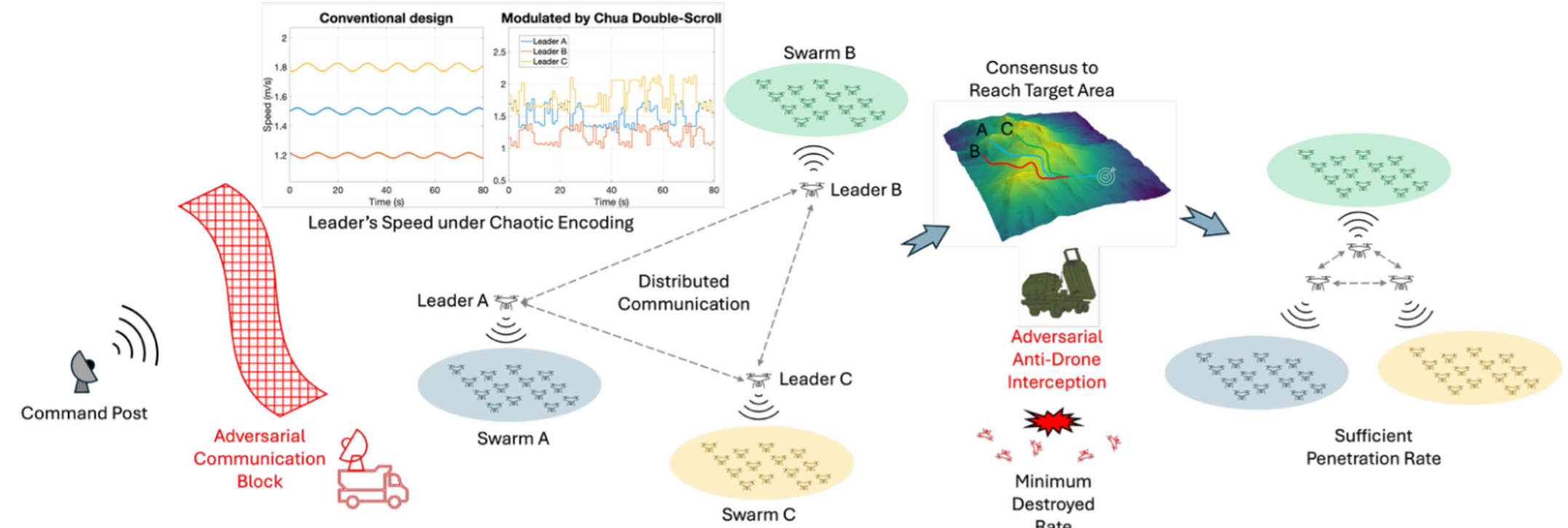}
    \caption{Illustrative motivation for onboard reference generation
    and variable transit timing. The interception and survival
    outcomes shown schematically are not evaluated in this paper.}
    \label{fig:motivation}
\end{figure}

Graph-based formation control supplies the coordination tools used
here. Nearest-neighbour rules, consensus, and Laplacian analysis
relate local coupling to collective
motion~\cite{jadbabaie2003coordination,olfatisaber2004consensus,fax2004information,olfatisaber2007consensus,ren2008distributed,mesbahi2010graph}.
Virtual leaders and artificial potentials prescribe relative
geometry~\cite{leonard2001virtual}, and spring-damper networks give
these couplings a mechanical interpretation~\cite{wiech2018virtual}.
Pinning and leader-following let a subset of agents inject a
reference into the
network~\cite{wang2002pinning,hong2006tracking,ren2007multivehicle,song2010secondorder};
grounded-Laplacian properties characterize the anchoring, and
leader-to-formation gains characterize the resulting
errors~\cite{pirani2016grounded,tanner2004leader}. These methods work
with onboard and external references alike, and the present work uses
them without modification.

Chaotic dynamics have also been used to generate robot motion: for
workspace coverage by mobile robots~\cite{nakamura2001chaotic}, for
coverage and patrolling
paths~\cite{volos2012chaotic,petavratzis2019inverse}, for swarm
mobility with coverage planning and collision
avoidance~\cite{rosalie2018chaos,dentler2019collision}, for the
velocity profile of a surveillance quadrotor~\cite{montanez2021chaotic}
and its three-axis hyperchaotic extension~\cite{cetina2024unpredictable},
and for synchronizing drone networks to
R\"ossler-type~\cite{florez2021inducting} and
spherical~\cite{duran2025spherical} chaotic trajectories. In contrast,
the signal here changes only the progress along a prescribed straight
route; the route and the formation offsets stay fixed, which is what
makes transit-time bounds follow directly from the rate limits.

The contribution is the combination of this onboard timing rule with
standard leader consensus and local follower control:
\begin{enumerate}
    \item One leader generates the route reference from preloaded
    endpoints. Saturation confines the progress rate to a positive
    interval and gives explicit bounds on the reference entry and
    arrival times.
    \item A scalar rate broadcast by the generating leader and fed
    forward by every leader cancels the reference velocity in their
    error dynamics, so the leader error is autonomous and no leader
    lags the reference. Followers need only local relative positions
    and velocities, so they carry the timing-dependent part of the
    error. A single cascade argument gives bounded errors and eventual
    containment of the whole swarm in the destination area.
    \item Simulations compare chaotic and constant-rate transit with
    the same route length, geometry, and gains, and report timing,
    formation errors, and the error of a constant-velocity predictor.
\end{enumerate}

Section~\ref{sec:problem} states the transit problem,
Section~\ref{sec:framework} the reference and the controllers,
Section~\ref{sec:stability} the analysis, and
Section~\ref{sec:sim} the simulations;
Section~\ref{sec:conclusion} concludes.

\section{Problem Formulation}
\label{sec:problem}
\subsection{Notation and the Pinned Laplacian}
$\col(x_1,\ldots,x_m)$ stacks vectors into one column, $\one_m$ is
the vector of $m$ ones, $I_2$ the $2\times2$ identity, $\diag(\cdot)$
a diagonal matrix, and $\otimes$ the Kronecker product. For a matrix
$A$ we write $\bar A=A\otimes I_2$: if $A$ couples $\nu$ scalars,
$\bar A$ applies the same coupling to $\nu$ planar vectors. A weighted
undirected graph on $\nu$ nodes has weights $a_{\ell m}=a_{m\ell}\ge0$,
neighbour sets $\mathcal{N}_\ell=\{m:a_{\ell m}>0\}$, and Laplacian
$L$ with $(L)_{\ell\ell}=\sum_ma_{\ell m}$ and
$(L)_{\ell m}=-a_{\ell m}$. The identity used throughout is that for
$x=\col(x_1,\ldots,x_\nu)$ with $x_\ell\in\R^2$, the vector whose
$\ell$-th block is $\sum_{m\in\mathcal{N}_\ell}a_{\ell m}(x_\ell-x_m)$
equals $\bar Lx$. In particular $\bar L(\one_\nu\otimes c)=0$ for every
$c\in\R^2$: a common translation of all nodes is invisible to
relative coupling. The one matrix fact we need is the following.

\begin{lemma}
\label{lem:pin}
Let $L$ be the Laplacian of a weighted undirected graph on $\nu$ nodes
and $B=\diag(b_1,\ldots,b_\nu)$ with $b_\ell\ge0$. Then
$L+B$ is positive definite if and only if every connected component
of the graph contains a node with $b_\ell>0$~\cite{pirani2016grounded}.
\end{lemma}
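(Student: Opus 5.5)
The plan is to work with the quadratic form of $L+B$. For any $x=\col(x_1,\ldots,x_\nu)\in\R^\nu$, expanding the definition of $L$ and using the symmetry $a_{\ell m}=a_{m\ell}$ gives
\[
x^\top(L+B)x=\tfrac12\sum_{\ell,m}a_{\ell m}(x_\ell-x_m)^2+\sum_\ell b_\ell x_\ell^2 .
\]
Both sums are nonnegative, so $L+B$ is symmetric positive semidefinite. It is therefore positive definite exactly when its kernel is trivial. Since the form is a sum of nonnegative terms, $x^\top(L+B)x=0$ holds if and only if every term vanishes. That means $x_\ell=x_m$ whenever $a_{\ell m}>0$, and $x_\ell=0$ whenever $b_\ell>0$.

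For sufficiency, assume every connected component contains a node with $b_\ell>0$, and take $x$ with zero form. The edge conditions propagate along paths, so $x$ is constant on each connected component. In each component some node $\ell$ has $b_\ell>0$, which forces $x_\ell=0$. Hence $x$ vanishes on that whole component, so $x=0$ and $L+B$ is positive definite.

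For necessity, argue by contrapositive. Suppose some component $C$ has $b_\ell=0$ for all $\ell\in C$. Let $x$ be the indicator vector of $C$. Every edge has both endpoints in $C$ or both outside it, so all edge terms vanish. All pinning terms with $x_\ell\neq0$ have $b_\ell=0$. The form is therefore zero at $x\neq0$, and $L+B$ is not positive definite.

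No step is a real obstacle. The only care needed is the propagation argument: "constant on a component" follows by induction along a path of positive-weight edges, using the definition of connected component with respect to the edge set $\{a_{\ell m}>0\}$. The result is also cited from~\cite{pirani2016grounded}, so this self-contained argument serves as a check.
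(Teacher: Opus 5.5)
Your proof is correct and follows the paper's argument. Both use the quadratic-form identity, the observation that a zero form forces the vector to be constant on each component and zero at pinned nodes, and the indicator vector of an unpinned component for the converse. The only difference is in the converse: you show the form vanishes at the indicator vector, while the paper places it in the null space, and these are equivalent for a symmetric positive semidefinite matrix.
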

\begin{proof}
$y^\top(L+B)y=\tfrac12\sum_{\ell,m}a_{\ell m}(y_\ell-y_m)^2
+\sum_\ell b_\ell y_\ell^2\ge0$, with equality only if $y$ is
constant on each component and zero at every node with
$b_\ell>0$, which forces $y=0$ under the stated condition.
Conversely, if some component contains no node with $b_\ell>0$,
its indicator vector lies in the null space of $L+B$.
\end{proof}

\subsection{Formation-Tracking Baseline}
All agents move at fixed altitude in the horizontal plane $\R^2$.
Consider $\nu$ agents at $x_\ell\in\R^2$ with desired constant offsets
$\rho_\ell\in\R^2$ from a reference $r_c(t)$ supplied from outside,
and let $\tilde x_\ell=x_\ell-r_c-\rho_\ell$ be the formation error of
agent~$\ell$. A representative first-order formation controller is
\begin{equation}
    \dot x_\ell=-\sum_{m\in\mathcal{N}_\ell}a_{\ell m}
    (\tilde x_\ell-\tilde x_m)-k_rb_\ell\,\tilde x_\ell,
    \label{eq:baseline}
\end{equation}
where $\tilde x_\ell-\tilde x_m=(x_\ell-x_m)-(\rho_\ell-\rho_m)$ needs
only the relative position of the neighbour, $k_r>0$, and
$b_\ell\in\{0,1\}$ marks the agents that receive the
reference~\cite{ren2008distributed,mesbahi2010graph}. The first term
pulls each agent toward the shape, the second pins the shape to the
reference. Stacking with $\tilde x=\col(\tilde x_1,\ldots,\tilde x_\nu)$
and $B=\diag(b_1,\ldots,b_\nu)$, and using
$\dot{\tilde x}=\dot x-\one_\nu\otimes\dot r_c$,
\begin{equation}
    \dot{\tilde x}=-(\bar L+k_r\bar B)\tilde x-\one_\nu\otimes\dot r_c.
    \label{eq:baseline_error}
\end{equation}
Two things follow. If the reference is fixed, Lemma~\ref{lem:pin}
makes the error decay exponentially exactly when every connected
component contains a pinned agent; without pinning, $\bar L$ has the null space
$\one_\nu\otimes c$: the shape still converges, but its offset from
the reference is set by the initial errors, not by the controller. If the reference moves with bounded $\dot r_c$, then under the
same pinning condition $-\one_\nu\otimes\dot r_c$ is a persistent
bounded input: no agent receives the reference velocity, the pinned
agents feel the motion only through their own error to $r_c$ and the
others only through their neighbours' errors, so every agent, pinned
or not, lags. The error stays bounded but is driven as long as the
reference moves; at constant $\dot r_c$ it settles at the nonzero
offset $-(\bar L+k_r\bar B)^{-1}(\one_\nu\otimes\dot r_c)$.
Feeding the reference velocity forward to every agent removes this
input, and with it the offset.
This is the structure the rest
of the paper builds on. The leaders run \eqref{eq:baseline} with the
reference velocity fed forward and a single pinned agent; the
followers run its second-order counterpart, with damping, on their
offset from the leader of their own cluster.

\subsection{Mission Objective and Scope}
The reference moves from a deployment waypoint $p_s\in\R^2$ to a
destination $\pgoal\in\R^2$. The destination area is
\begin{equation}
    \mathcal{A}=\{x\in\R^2:\|x-\pgoal\|\le R_A\},
    \label{eq:area}
\end{equation}
with $D=\|\pgoal-p_s\|>R_A>0$, so the reference starts outside
$\mathcal{A}$; whether each agent does depends on its offset and
initial error. The corridor is assumed obstacle-free and wide enough
for the formation and its tracking errors.

The objectives are to generate the reference onboard, bound its
transit time, keep the formation errors bounded, and bring every agent
into $\mathcal{A}$ in finite time so that it stays there. The
aperiodic timing is evaluated by the error of a constant-velocity
predictor. The analysis distinguishes reference arrival, entry of
individual agents, and containment of the whole swarm.

The model is a formation coordination layer: leaders execute velocity
commands and followers acceleration commands through ideal inner
loops. Relative displacements (and, for followers, relative
velocities) are available in a common planar frame, and leader~1 knows its position in the frame of the
stored waypoints. Absolute localization, frame alignment, and altitude
control are assumed available.

\section{Proposed Chaotic-Rate Transit and Spring-Damper Framework}
\label{sec:framework}
\subsection{Clustered Embedded-Leader Architecture}
The swarm has $N_L$ clusters. Cluster~$i$ has one leader at
$p_i\in\R^2$ and $n-1$ followers at $q_{i\ell}\in\R^2$, so $N=N_Ln$.
Leader~1 generates the mission reference. The leaders communicate over
a fixed undirected weighted graph $\mathcal{G}_L$ with Laplacian
$L_L$, and each cluster keeps its formation around its own leader
(Fig.~\ref{fig:architecture}).

The information requirements are small; Table~\ref{tab:info} lists
what each agent stores, measures, receives, and computes. Leader~1
alone holds the endpoints and the oscillator, integrates the
reference, and broadcasts the scalar progress rate $\dot s$ to
every leader; under Assumption~\ref{ass:network} the broadcast
reaches all leaders at the same instant, whatever the topology of
$\mathcal{G}_L$, which constrains only the consensus term. Every other leader stores the route
direction $u$ and its station-offset differences $\sigma_i-\sigma_j$
from its leader neighbours, measures their relative positions, and
receives one scalar, $\dot s$; it never receives $r$. Followers store their offsets and
measure relative positions and velocities to their follower
neighbours and, if anchored (coupled by a spring and a damper to
their own leader), to that leader; they receive
neither $r$, nor $\dot s$, nor the oscillator state, so the motion
reaches a cluster only through the springs and dampers at its
anchored nodes. Nothing flows the other way: the reference uses no
swarm state and no leader uses follower state, so a lagging follower
cannot slow its leader or delay the reference.

\begin{table}[!b]
\caption{Information allocation. Stores lists data loaded before
flight; Receives lists in-flight reception. The rate $\dot s$ is one scalar, sent by
leader~1 and applied by every leader at the same instant; no follower
receives it, and nothing flows from followers to leaders.}
\label{tab:info}
\centering
\footnotesize
\setlength{\tabcolsep}{4pt}
\begin{tabular}{@{}>{\raggedright\arraybackslash}p{1.15cm}>{\raggedright\arraybackslash}p{1.6cm}>{\raggedright\arraybackslash}p{2.0cm}>{\raggedright\arraybackslash}p{1.05cm}>{\raggedright\arraybackslash}p{1.9cm}@{}}
\toprule
Agent & Stores & Measures & Receives & Computes \\
\midrule
Leader~1 & $p_s$, $\pgoal$, $\sigma_1$, $\sigma_1-\sigma_j$, oscillator
parameters, $\xi(0)$ & own position in the waypoint frame; relative
positions of leader neighbours & -- & $u$, $\xi$, $s$, $\dot s$, $r$
by \eqref{eq:reference}--\eqref{eq:chua}; velocity command
\eqref{eq:leader}; sends $\dot s$ \\
\addlinespace
Leader $i\ge2$ & $u$, $\sigma_i-\sigma_j$ & relative positions of
leader neighbours & $\dot s$ & velocity command \eqref{eq:leader} \\
\addlinespace
Anchored follower & $\rho_\ell$, $\rho_\ell-\rho_m$ & relative position
and velocity to follower neighbours and to own leader & -- &
acceleration command \eqref{eq:follower_law} \\
\addlinespace
Other follower & $\rho_\ell-\rho_m$ & relative position and velocity to
follower neighbours & -- & acceleration command \eqref{eq:follower_law} \\
\bottomrule
\end{tabular}
\end{table}

\begin{assumption}
\label{ass:network}
The leader graph is fixed, undirected, and connected. Each follower
graph is fixed and undirected, and every connected component of it
contains at least one anchored follower. The relative measurements
and the common rate are exact and delay-free.
\end{assumption}

\begin{figure}[!t]
    \centering
    \includegraphics[width=\linewidth]{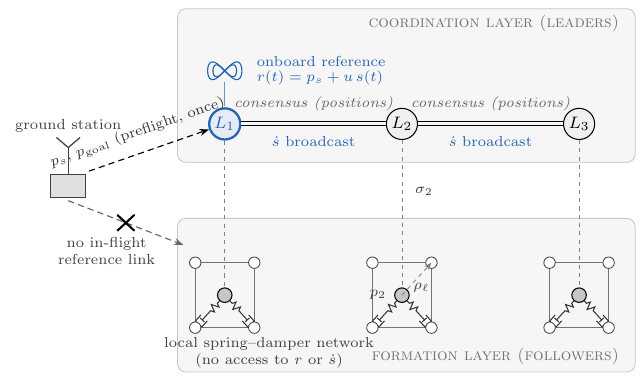}
    \caption{Two-layer architecture. Top: leader~1 receives $(p_s,\pgoal)$ once before flight and generates the reference \eqref{eq:reference} onboard; there is no in-flight reference link. The leader links carry one signal, the scalar $\dot s$ that leader~1 broadcasts to all leaders; the consensus term in \eqref{eq:leader} uses measured relative positions of leader neighbours (Table~\ref{tab:info}). The rate is applied by every leader, not negotiated through consensus. Bottom: the followers of each cluster hold offsets $\rho_\ell$ from their own leader (station offset $\sigma_i$ from $r$) through a local spring-damper network and receive neither $r$ nor $\dot s$.}
    \label{fig:architecture}
\end{figure}

\subsection{Onboard Chaotic-Rate Transit Reference}
Leader~1 computes the unit route direction $u$ and the reference $r$
as
\begin{equation}
    u=\frac{\pgoal-p_s}{D},\qquad r(t)=p_s+u\,s(t),
    \label{eq:reference}
\end{equation}
where $s(0)=0$ is the distance travelled along the route. Its rate is
\begin{equation}
    \dot s=
    \begin{cases}
        v_0\left[1+\kappa\operatorname{sat}\left(\dfrac{\xi_1(t)}{B_1}\right)\right],
        & t<T_g,\\[2pt]
        0, & t\ge T_g,
    \end{cases}
    \label{eq:schedule}
\end{equation}
where $v_0>0$, $B_1>0$, $0\le\kappa<1$,
$\operatorname{sat}(y)=\max\{-1,\min\{y,1\}\}$, and $T_g$ is the first
time at which $s=D$. In words, the reference advances along the
straight route at the nominal speed $v_0$ modulated by up to
$\pm\kappa v_0$ according to the first oscillator coordinate, and
stops when it reaches the goal. The case $\kappa=0$ is the
constant-rate baseline.

The oscillator is
\begin{equation}
\begin{aligned}
    T_s\dot\xi_1&=\alpha(\xi_2-\xi_1-f(\xi_1)),\\
    T_s\dot\xi_2&=\xi_1-\xi_2+\xi_3,\\
    T_s\dot\xi_3&=-\beta\xi_2,
\end{aligned}
\label{eq:chua}
\end{equation}
with $f(x)=m_1x+\tfrac12(m_0-m_1)(|x+1|-|x-1|)$ and $T_s>0$. This is
Chua's circuit, a three-state piecewise-linear oscillator; the states
and the parameters $\alpha,\beta,m_0,m_1$ are dimensionless and $T_s$
sets the time scale. With the parameters of
Table~\ref{tab:sim_params} the trajectory alternates irregularly
between two scrolls (Fig.~\ref{fig:chua}), so $\xi_1$ spends
irregular stretches near each of two levels; this is what produces
the fast and slow intervals~\cite{matsumoto1984chaotic,chua1986double}.
Not every initial state gives this motion, so the initial state is
part of the design; the states used in the simulations are drawn from
the bank described in Section~\ref{sec:sim_setup}.

Saturation gives, for $t<T_g$, the closed band
\begin{equation}
\begin{gathered}
    0<v_{\min}\le \dot s\le v_{\max},\\
    v_{\min}=v_0(1-\kappa),\qquad v_{\max}=v_0(1+\kappa).
\end{gathered}
    \label{eq:band}
\end{equation}
The bound holds whether or not the saturation is active, so no bound
on the attractor is needed, and since the Chua vector field is
globally Lipschitz the oscillator state exists for all time. Only the
timing changes; the path~\eqref{eq:reference} is fixed, and because
$v_{\min}>0$ the reference never stops or reverses before the goal,
whatever the oscillator does.

\begin{figure}[!t]
    \centering
    \includegraphics[width=\linewidth]{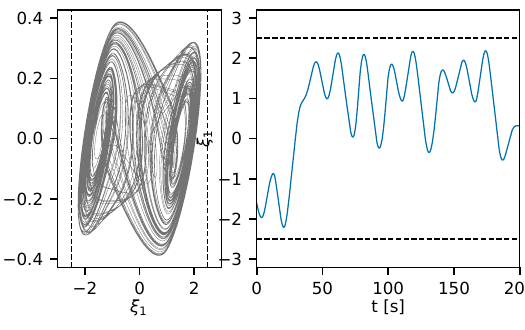}
    \caption{Chua generator with the parameters of
    Table~\ref{tab:sim_params}: attractor projection and driving
    coordinate. Dashed lines indicate the clipping thresholds
    $\pm B_1$.}
    \label{fig:chua}
\end{figure}

\subsection{Leader Coordination}
Leader~$i$ is assigned a constant station offset $\sigma_i\in\R^2$
from $r$, and its formation error is $\tilde p_i=p_i-r-\sigma_i$.
Leader~1 computes $\dot s$ from \eqref{eq:schedule} and broadcasts
it to every leader; each adds the same feedforward
$u\dot s$ to its velocity command at the same instant (exact and
delay-free under Assumption~\ref{ass:network}). The rate is never
estimated or negotiated through the network: the consensus term below
acts only on the position errors $\tilde p_i$. The protocol is
\eqref{eq:baseline} on the leader graph, with edge weights
$k_ca_{ij}$ and pinning gain $k_r=k_g$, the reference velocity
$u\dot s$ fed forward, and only leader~1 pinned:
\begin{equation}
    \dot p_i=u\,\dot s
    -k_c\sum_{j\in\mathcal{N}_i^L}a_{ij}(\tilde p_i-\tilde p_j)
    -k_gb_i\,\tilde p_i,
    \label{eq:leader}
\end{equation}
where $b_1=1$ and $b_i=0$ for $i\ge2$,
$\tilde p_i-\tilde p_j=(p_i-p_j)-(\sigma_i-\sigma_j)$ is a relative
measurement, and $k_c,k_g>0$ have units $\mathrm{s}^{-1}$ for
dimensionless weights. In words, the first term moves leader~$i$ with
the reference at the broadcast rate and is the only place the rate
enters. The second term holds the leader formation from relative
positions and is zero whenever the leaders hold their relative
stations, whatever $\dot s$ does. The third term, active on leader~1
alone, pins the formation to $r$; only leader~1 uses the reference
position.
With $p=\col(p_1,\ldots,p_{N_L})$,
$\tilde p=\col(\tilde p_1,\ldots,\tilde p_{N_L})$,
$B_L=\diag(1,0,\ldots,0)$, and $M_L=k_cL_L+k_gB_L$, the stacked law
is $\dot p=(\one_{N_L}\otimes u)\dot s-\bar M_L\tilde p$. Since
$\dot r=u\dot s$, the feedforward cancels the reference velocity
exactly and the leader error is unforced:
\begin{equation}
    \dot{\tilde p}=-\bar M_L\tilde p.
    \label{eq:leader_error}
\end{equation}
Compare with \eqref{eq:baseline_error}: the input term is gone. By
Lemma~\ref{lem:pin} and Assumption~\ref{ass:network} ($\mathcal{G}_L$
connected, leader~1 pinned), $M_L$ is positive definite, so the
formation $p_i=r+\sigma_i$ is invariant and initial errors decay
exponentially with exponent $\lambda_L=\lambda_{\min}(M_L)$, whatever
$\dot s$ does: the timing signal never enters the leader error, no
leader lags leader~1, and consensus only removes the initial station
errors, whereas in \eqref{eq:baseline_error} the reference motion
reaches every agent, pinned or not, only through the position error.

Broadcasting one rate avoids synchronizing separate chaotic generators, but
the cancellation is exact only for a common command: unequal received
rates or velocity-tracking errors enter~\eqref{eq:leader_error} as a
disturbance, which keeps the error bounded on a fixed connected graph
but not zero. A lost generator or a disconnected leader graph is
outside this analysis; leader~1 is a single point of failure.

\subsection{Spring-Damper Follower Network}
Followers execute acceleration commands under the double-integrator
model, and their gains are mass-normalized. Follower~$\ell$ of
cluster~$i$ has desired offset $\rho_\ell\in\R^2$ from its leader and
formation error $e_{i\ell}=q_{i\ell}-p_i-\rho_\ell$. The follower
graph $\mathcal{G}_F$ has Laplacian $L_F$, and
$\Gamma=\diag(\gamma_1,\ldots,\gamma_{n-1})$ with $\gamma_\ell>0$ if
follower~$\ell$ is anchored and $\gamma_\ell=0$ otherwise. Graphs,
gains, and offsets are the same in every cluster.

The follower law is the second-order counterpart of
\eqref{eq:baseline}: a spring and a damper on every edge of the
follower graph, and a spring and a damper to the leader at every
anchored node,
\begin{equation}
\begin{split}
    \ddot q_{i\ell}
    ={}&-\sum_{m\in\mathcal{N}_\ell^F}a_{\ell m}
    \big[k_s(e_{i\ell}-e_{im})+c_s(\dot e_{i\ell}-\dot e_{im})\big]\\
    &-\gamma_\ell\big(k_pe_{i\ell}+c_p\dot e_{i\ell}\big),
\end{split}
\label{eq:follower_law}
\end{equation}
with $a_{\ell m}=a_{m\ell}>0$ and, for dimensionless weights,
$k_s,k_p>0$ in $\mathrm{s}^{-2}$ and $c_s,c_p>0$ in $\mathrm{s}^{-1}$.
Every term is a relative measurement:
$e_{i\ell}-e_{im}=(q_{i\ell}-q_{im})-(\rho_\ell-\rho_m)$ and
$\dot e_{i\ell}-\dot e_{im}=\dot q_{i\ell}-\dot q_{im}$ involve the
two followers only, and $e_{i\ell}$, $\dot e_{i\ell}$ are measured
relative to the leader only where $\gamma_\ell>0$. The dampers need
relative velocity, measured or estimated; in the simulations the
anchored followers use the leader velocity given by \eqref{eq:leader},
which for a kinematic leader is both its command and its true
velocity. It enters \eqref{eq:follower_law} only inside the
difference $\dot e_{i\ell}$, as damping: the follower opposes a
velocity mismatch once it exists, whereas a leader adds $u\dot s$
before any mismatch arises. No follower feeds a velocity forward.

Stack $q_i=\col(q_{i1},\ldots,q_{i,n-1})$,
$e_i=\col(e_{i1},\ldots,e_{i,n-1})$,
$\rho=\col(\rho_1,\ldots,\rho_{n-1})$, let $J=\one_{n-1}\otimes I_2$
so that $e_i=q_i-Jp_i-\rho$. Then \eqref{eq:follower_law} reads
\begin{equation}
\begin{split}
    \ddot q_i&=-\bar K_Fe_i-\bar C_F\dot e_i,\\
    K_F&=k_sL_F+k_p\Gamma,\quad C_F=c_sL_F+c_p\Gamma,
\end{split}
\label{eq:follower_stacked}
\end{equation}
a linear spring-damper acting on the formation error. Both $K_F$ and
$C_F$ have the form of Lemma~\ref{lem:pin}, so by
Assumption~\ref{ass:network} they are positive definite.

For the analysis we take as second state the actual follower velocity
$w_i=\dot q_i$ rather than $\dot e_i$: the leader velocity $\dot p_i$
jumps when the reference stops, so $\dot e_i=w_i-J\dot p_i$ jumps
with it, whereas $q_i$ and $\dot q_i$ are continuous. Substituting
$\dot e_i=w_i-J\dot p_i$ into \eqref{eq:follower_stacked},
\begin{equation}
\begin{aligned}
    \dot e_i&=w_i-J\dot p_i,\\
    \dot w_i&=-\bar K_Fe_i-\bar C_Fw_i+\bar C_FJ\dot p_i,
\end{aligned}
\label{eq:follower_error}
\end{equation}
a linear system whose only input is the leader velocity, which is
bounded. Here the followers differ from the leaders. They receive no
$\dot s$, so the leader motion reaches them only through the springs
and dampers at the anchored nodes: it enters
\eqref{eq:follower_error} as a persistent input, like the reference
velocity in \eqref{eq:baseline_error}, not as a feedforward that
cancels, as in \eqref{eq:leader_error}. Differentiating
$e_i=q_i-Jp_i-\rho$ twice and using \eqref{eq:follower_stacked}
gives, wherever $\dot p_i$ is differentiable (almost everywhere),
$\ddot e_i+\bar C_F\dot e_i+\bar K_Fe_i=-J\ddot p_i$: the forcing
term is the leader acceleration, which no follower measures or
receives, so a change of the leader velocity acts on a follower only
through the position and velocity mismatch it produces. A leader at
constant velocity is no input to this equation. Under a constant rate
the leader velocity is $uv_0$ plus a consensus correction that decays
by \eqref{eq:leader_error}, so the follower error decays with the
leader error; under the chaotic rate the leader velocity keeps
changing throughout the transit and the follower error persists,
bounded but not zero. Leaders never lag
the reference; followers lag every change of pace of their leader.

\subsection{Implementation Limits}
The band constrains the reference, not the vehicles: leader
corrections can take a leader outside it, and the model imposes no
actuator limits on the followers. A flight implementation must check
the combined feedforward and correction commands against the vehicle
limits, acquire the formation before the transit, and choose $T_s$
and the gains so that the inner loops can follow. The stop at $T_g$ is
an ideal reference stop; real braking takes finite time and distance,
and the reference windows below do not bound it. Fixed offsets provide
no collision avoidance during transients, so initial placement,
vehicle size, and corridor clearance must be checked separately.

\section{Transit and Closed-Loop Boundedness}
\label{sec:stability}
The closed loop is a cascade. The reference
\eqref{eq:reference}--\eqref{eq:chua} runs open loop; the leader
error \eqref{eq:leader_error} is autonomous; the followers
\eqref{eq:follower_error} are driven by the leader velocity. The
theorem treats the three stages in that order. Norms are Euclidean
and induced. Positions are continuous at $T_g$; the leader velocity
may jump there, and the equations hold almost everywhere.

\begin{theorem}
\label{thm:transit}
Let Assumption~\ref{ass:network} hold, let $k_c,k_g,k_s,k_p,c_s,c_p$, $v_0$, $B_1$,
and $T_s$ be positive, $0\le\kappa<1$, $D>R_A>0$, and let all initial
states be finite. Then for
\eqref{eq:reference}--\eqref{eq:follower_law}:
\begin{enumerate}
    \item The reference first enters $\mathcal{A}$ at $T_A$ and
    reaches $\pgoal$ at $T_g$, with
    \begin{equation}
    \begin{aligned}
        \frac{D}{v_{\max}}&\le T_g\le\frac{D}{v_{\min}},\\
        \frac{D-R_A}{v_{\max}}&\le T_A\le
                                \frac{D-R_A}{v_{\min}},
    \end{aligned}
    \label{eq:windows}
    \end{equation}
    and it stays in $\mathcal{A}$ after $T_A$.
    \item The leader errors satisfy
    \begin{equation}
        \|\tilde p(t)\|\le e^{-\lambda_Lt}\|\tilde p(0)\|.
        \label{eq:leader_bound}
    \end{equation}
    If $\tilde p(0)=0$, then $p_i=r+\sigma_i$ and $\dot p_i=u\dot s$
    for all $t$: every leader, including those that never see $r$,
    moves at the reference velocity at every instant, with no lag.
    In general $\dot p_i$ is $u\dot s$ plus the $i$-th block of
    $-\bar M_L\tilde p$, a station correction that decays as in
    \eqref{eq:leader_bound} and does not depend on $\dot s$.
    \item $e_i$, $w_i$, and $\dot e_i$ are bounded for all $t\ge0$,
    and for $t>T_g$ all leader and follower errors and velocities
    converge to zero exponentially.
    \item If
    \begin{equation}
        R_{\mathrm{geom}}=
        \max\left\{\max_i\|\sigma_i\|,
                   \max_{i,\ell}\|\sigma_i+\rho_\ell\|\right\}
        <R_A,
        \label{eq:geom}
    \end{equation}
    that is, if the smallest goal-centred disc containing every final
    leader and follower station fits strictly inside $\mathcal{A}$,
    then there is a finite time after which every agent stays in
    $\mathcal{A}$.
\end{enumerate}
\end{theorem}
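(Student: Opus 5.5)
The plan follows the cascade order announced before the theorem: first the open-loop reference, then the autonomous leader error, then the follower system driven by the leader velocity, and finally a triangle-inequality containment argument.

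\emph{Reference timing (item 1).} For $t<T_g$, \eqref{eq:band} gives $v_{\min}t\le s(t)\le v_{\max}t$ by integrating $\dot s$. Since $s$ is continuous and strictly increasing, $T_g$ exists and satisfies $v_{\min}T_g\le D\le v_{\max}T_g$, which is the first window. Along the route, $\|r(t)-\pgoal\|=D-s(t)$ while $s\le D$. Hence $r$ enters $\mathcal{A}$ exactly when $s=D-R_A$, and the same integration gives the $T_A$ window. After $T_A$, $s$ only increases up to $D$ and then freezes, so $D-s\le R_A$ for all later times and $r$ stays in $\mathcal{A}$.

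\emph{Leader errors (item 2).} Equation \eqref{eq:leader_error} holds for all $t$, since $\dot r=u\dot s$ almost everywhere, including the jump at $T_g$. By Lemma~\ref{lem:pin}, $\bar M_L$ is symmetric positive definite, so $V=\|\tilde p\|^2$ satisfies $\dot V\le-2\lambda_L V$, which gives \eqref{eq:leader_bound}. If $\tilde p(0)=0$ the solution is identically zero, and \eqref{eq:leader} reduces to $\dot p_i=u\dot s$. The general velocity decomposition is read off from \eqref{eq:leader}.

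\emph{Followers (item 3).} Rewrite \eqref{eq:follower_error} as $\dot z=A z+G\dot p_i$ with $z=\col(e_i,w_i)$ and input $\dot p_i$. To show $A$ is Hurwitz, I would use the energy $\tfrac12 w^\top w+\tfrac12 e^\top\bar K_F e$ plus a small cross term $\epsilon e^\top w$, together with the positive definiteness of $K_F$ and $C_F$. An alternative is to note that $\lambda^2+\lambda C+K$ is nonsingular for $\operatorname{Re}\lambda\ge0$: for $x$ in its kernel, $x^*(\lambda^2+\lambda C+K)x=0$ with $C,K\succ0$ forces $\operatorname{Re}\lambda<0$. The input satisfies $\|\dot p_i\|\le v_{\max}+\|M_L\|\|\tilde p(0)\|$, so variation of constants gives bounded $z$, and then $\dot e_i=w_i-J\dot p_i$ is bounded.

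For $t>T_g$ the leader velocity is $-(\bar M_L\tilde p)_i$, which decays exponentially. Then $z$ is a Hurwitz system driven by an exponentially decaying input, starting from a bounded state at $T_g$, so $z$ decays exponentially; take any rate below $\min(\lambda_L,\text{decay rate of }A)$. Since $w_i\to0$ and $\dot p_i\to0$, $\dot e_i\to0$ as well, and all leader and follower velocities vanish. The main care point here is the choice of state $w_i$, so that the jump of $\dot p_i$ at $T_g$ only enters as a bounded measurable input.

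\emph{Containment (item 4).} For $t\ge T_g$, $r=\pgoal$. Leader $i$ then satisfies $\|p_i-\pgoal\|\le\|\sigma_i\|+\|\tilde p_i\|$, and follower $\ell$ satisfies
\[
\|q_{i\ell}-\pgoal\|\le\|\sigma_i+\rho_\ell\|+\|\tilde p_i\|+\|e_{i\ell}\|.
\]
Both bounds are at most $R_{\mathrm{geom}}$ plus error terms that tend to zero by items 2 and 3. Choose $T^*\ge T_g$ after which all errors stay below $(R_A-R_{\mathrm{geom}})/2$; exponential decay makes such a $T^*$ exist, and the error bounds are monotone envelopes, so the agents stay in $\mathcal{A}$ thereafter. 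The expected main obstacle is producing a clean Hurwitz argument for the follower matrix and ensuring the decay envelope is monotone. I would handle monotonicity by using exponential envelopes $c e^{-\mu(t-T_g)}$ rather than the raw norms.
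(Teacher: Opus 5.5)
Your proposal is correct and follows the paper's proof essentially step for step: band integration for the windows, the autonomous leader error with $\bar M_L$ positive definite, and Hurwitz-ness of $A_F$ via the scalar quadratic $\lambda^2\|x\|^2+\lambda\,x^*\bar C_Fx+x^*\bar K_Fx=0$ (your second option is exactly the paper's argument). The rest also matches: the bounded-input and then exponentially-decaying-input arguments with $w_i=\dot q_i$ as the state, and triangle-inequality containment with margin $R_A-R_{\mathrm{geom}}$; the monotone-envelope concern in item 4 is unnecessary, since convergence to zero already gives a finite time after which every error stays below any fixed threshold.
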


\begin{proof}
\emph{(i)} By \eqref{eq:band},
$v_{\min}t\le s(t)\le v_{\max}t$ for $t<T_g$. Since $v_{\min}>0$, the
levels $D-R_A$ and $D$ are reached in finite time, which gives
\eqref{eq:windows}. The distance from the reference to the goal is
$D-s(t)$, which is nonincreasing, so entry into $\mathcal{A}$ is
permanent.

\emph{(ii)} $\bar M_L$ is symmetric positive definite by
Lemma~\ref{lem:pin}, so \eqref{eq:leader_error} gives
\eqref{eq:leader_bound} and the invariance of $\tilde p=0$, on which
$\dot p_i=u\dot s$. The stacked law gives
$\|\dot p(t)\|\le\sqrt{N_L}\,v_{\max}+\|M_L\|\,\|\tilde p(0)\|$ for
all $t$, on both sides of the stop; this is the bound the followers
need.

\emph{(iii)} With $z_i=\col(e_i,w_i)$, \eqref{eq:follower_error}
reads $\dot z_i=A_Fz_i+H_F\dot p_i$, where
\begin{equation}
    A_F=\begin{bmatrix}0&I_{2(n-1)}\\-\bar K_F&-\bar C_F\end{bmatrix},
    \qquad
    H_F=\begin{bmatrix}-J\\\bar C_FJ\end{bmatrix}.
    \label{eq:follower_state}
\end{equation}
$A_F$ is Hurwitz: if $A_F\col(x,y)=\lambda\col(x,y)$ with
$\col(x,y)\ne0$, the first block row gives $y=\lambda x$ with
$x\ne0$, and the second, premultiplied by $x^*$, gives the scalar
quadratic
\[
    \lambda^2\|x\|^2+\lambda\,x^*\bar C_Fx+x^*\bar K_Fx=0,
\]
whose three coefficients are real and positive because $\bar K_F$ and
$\bar C_F$ are positive definite; both roots therefore have negative
real part~\cite{tisseur2001quadratic}. A Hurwitz system with bounded
input has bounded state~\cite{khalil2002nonlinear}, and the input
$\dot p_i$ is bounded by (ii); then $\dot e_i=w_i-J\dot p_i$ is
bounded too. The jump in $\dot p_i$ at $T_g$ does not reset $z_i$,
because $z_i$ is built from the continuous $q_i$ and $\dot q_i$. For
$t>T_g$, $\dot s=0$, so $\dot p=-\bar M_L\tilde p$ decays
exponentially, and a Hurwitz system driven by an exponentially
decaying input has exponentially decaying state.

\emph{(iv)} For $t\ge T_g$, $r=\pgoal$, so
$p_i-\pgoal=\sigma_i+\tilde p_i$ and
$q_{i\ell}-\pgoal=\sigma_i+\rho_\ell+\tilde p_i+e_{i\ell}$. Let
$\delta=R_A-R_{\mathrm{geom}}>0$. By (iii) there is a finite time
after which every $\|\tilde p_i\|$ and every
$\|\tilde p_i+e_{i\ell}\|$ is below $\delta/2$, so every agent is
within $R_{\mathrm{geom}}+\delta/2<R_A$ of the goal.
\end{proof}

\begin{remark}
\label{rem:scope}
Equation~\eqref{eq:windows} bounds the reference, not the agents:
offsets and tracking errors shift the agents' entry times, and
boundedness gives no size for the follower error. The proof uses only
that the rate is bounded, positive, and set to zero at arrival, so any
such schedule, chaotic or not, gives the same result.
\end{remark}

\section{Simulation Studies}
\label{sec:sim}
\subsection{Setup}
\label{sec:sim_setup}
There are $N_L=3$ clusters of $n=10$ agents, $N=30$. The leader graph
is the unit-weight path $1$--$2$--$3$; each follower graph is a
unit-weight ring of nine, anchored at $\gamma_1=\gamma_5=1$ with all
other $\gamma_\ell=0$. Leader offsets form an equilateral triangle of
circumradius $12$~m, $\sigma_i=12(\cos\theta_i,\sin\theta_i)^\top$
with $\theta_i=90^\circ,210^\circ,330^\circ$, and follower offsets a
ring of radius $3$~m, $\rho_\ell=3(\cos\phi_\ell,\sin\phi_\ell)^\top$
with $\phi_\ell=40^\circ(\ell-1)$, so $R_{\mathrm{geom}}\le15$~m
$<R_A=55$~m. Both offset sets are fixed in the world frame and do not
rotate with the route. The goal is at the origin and
$p_s=D(\cos b,\sin b)^\top$ for route bearing $b$. The oscillator uses
$\alpha=15.6$, $\beta=28$, $m_0=-1.143$, $m_1=-0.714$, and $T_s=12$~s,
with $v_0=2$~m/s, $\kappa=0.5$, and $B_1=2.5$. The runs have
$\max_t|\xi_1(t)|=2.26<B_1$, so the saturation never acted. The
guaranteed band is $[1,3]$~m/s; for $D=130$~m the reference arrival
window is $[130/3,130]$~s and the area-entry window $[25,75]$~s.
Table~\ref{tab:sim_params} lists the parameters.

The model is integrated with fixed-step fourth-order Runge--Kutta,
$\Delta t=5$~ms, over $T=200$~s; crossing times are taken at the
first sample past the threshold and reported to $0.1$~s. Ten trials
draw the bearing $b$ uniformly from $[0,2\pi)$ at the same $D$.
Random numbers come from the Mersenne
Twister with master seed $1$, trial $j$ using seed $100+j$. The
oscillator-state bank is generated from $\xi(0)=(0.7,0,0)^\top$ by
discarding a transient of $100$ oscillator time units and then saving
$40$ states $50$ time units apart; the ten trials draw from this bank
with replacement, so a state can repeat. Leader positions are
perturbed about their stations with standard deviation $2$~m per
coordinate, follower rings with $0.8$~m, and followers start at rest.
For these gains $\lambda_L=0.0990$~$\mathrm{s}^{-1}$. The leader error
reported below is $\max_i\|\tilde p_i\|$ and the follower error
$\max_{i,\ell}\|e_{i\ell}\|$ at each sampled time; peaks are maxima of
these histories over the stated interval.

The $\kappa=0$ baseline keeps the route length, graphs, geometry,
gains, and initial-error distributions and changes only the rate; its
initial errors are drawn independently of the chaotic runs. Its
reference enters $\mathcal{A}$ at $37.5$~s and reaches the goal at
$65$~s.

\begin{table}[!t]
\caption{Simulation parameters (graph weights are dimensionless).}
\label{tab:sim_params}
\centering
\begin{tabular}{llll}
\toprule
Symbol & Value & Symbol & Value \\
\midrule
$N_L$, $n$ & $3$, $10$ & $\alpha$, $\beta$ & $15.6$, $28$ \\
$\mathcal{G}_L$ & path & $m_0$, $m_1$ & $-1.143$, $-0.714$ \\
$\mathcal{G}_F$ & ring & $T_s$ & $12$~s \\
$k_c$, $k_g$ & $0.5$, $0.5$~$\mathrm{s}^{-1}$ & $v_0$ & $2$~m/s \\
$k_s$ & $8$~$\mathrm{s}^{-2}$ & $\kappa$, $B_1$ & $0.5$, $2.5$ \\
$k_p$ & $20$~$\mathrm{s}^{-2}$ & $D$, $R_A$ & $130$, $55$~m \\
$c_s$, $c_p$ & $8$, $20$~$\mathrm{s}^{-1}$ & $T$, $\Delta t$ & $200$~s, $5$~ms \\
$\gamma_1$, $\gamma_5$ & $1$, $1$ & offset radii & $12$, $3$~m \\
\bottomrule
\end{tabular}
\end{table}

The perturbed starts, and the larger releases of
Figs.~\ref{fig:swarm2d} and \ref{fig:poscons}, test convergence of
the ideal equations and are not flight-ready maneuvers: even with
zero position error, an anchored
follower at rest behind a leader moving at $2$~m/s is commanded
$c_p\cdot2=40$~$\mathrm{m/s^2}$ by its anchoring damper alone, so
these runs say nothing about actuator feasibility.

\subsection{Results}
\label{sec:sim_results}
Fig.~\ref{fig:transit} shows a representative trial. The rate stays
inside the band before arrival; a short fast start decays into a long
slow stretch, and a final fast interval closes the transit; the
reference enters $\mathcal{A}$ at $46.4$~s and reaches the goal at
$76.5$~s, against $37.5$ and $65$~s for the constant rate. Each
leader's velocity is the broadcast $u\dot s$ plus its own station
correction, the $i$-th block of $-\bar M_L\tilde p$, so the leader
speeds overlap, differ from the rate only while the initial errors
decay, and follow every change of the rate without lag thereafter;
the followers, which receive no feedforward and are driven only
through the springs and dampers, trail each change of the rate with a
transient;
and the reference progress stays between the two extreme
constant-rate lines.

Figs.~\ref{fig:swarm2d} and \ref{fig:poscons} show a separate
convergence demonstration with the same schedule. The three clusters
are released $100$, $80$, and $60$~m from their stations with
$k_c=k_g=0.25$~$\mathrm{s}^{-1}$, so $\lambda_L=0.0495$~$\mathrm{s}^{-1}$
and the approach is visible at route scale. With the goal at the
origin, the leader errors fall below
$1$~m at $58$, $70$, and $75$~s, and after subtracting $\sigma_i$ from
each leader and $\sigma_i+\rho_\ell$ from each follower the
trajectories collapse onto the reference and approach the goal after
arrival.

The lower panels of Fig.~\ref{fig:transit} show the formation errors.
By \eqref{eq:leader_error} the leader error does not depend on the
rate; the two curves differ only because the two displayed runs start
from different initial errors, and the small step at the stop is the
overshoot of $s$ past $D$ within the integration step that contains
the arrival, at most $v_{\max}\Delta t=1.5$~cm, a numerical artifact
of the fixed-step integration. Over the ensemble the leader error is
at most $0.15$~m at area entry and $1.1$~cm at the first sample after
arrival, which includes that step. After the first $5$~s of settling,
the follower error peaks during transit lie in $[8.3,12.2]$~cm for the
chaotic rate and $[1.6,7.5]$~cm for the constant rate, with means of
$10.2$ and $4.5$~cm. The gap is the price of withholding the rate
from the followers. Under a constant rate $\dot p_i$ becomes constant
as the leader error decays, and the equilibrium of
\eqref{eq:follower_error} is then $e_i=0$, so the constant-rate
values are driven only by the decaying leader error, which is all
that still changes $\dot p_i$; under the chaotic rate
each change of $\dot s$ reaches the followers only through the
anchoring springs and dampers and re-excites their error, whereas the
leaders apply $\dot s$ directly, so by \eqref{eq:leader_error} their
error never sees it.
At the ideal stop the followers keep their velocity while the leader
command drops to zero, and the stop transient reaches $1.07$~m for the
chaotic and $0.78$~m for the constant rate before decaying; its size
depends on the approach speed and the tracking state at the stop.

\begin{figure}[!t]
    \centering
    \includegraphics[width=\linewidth]{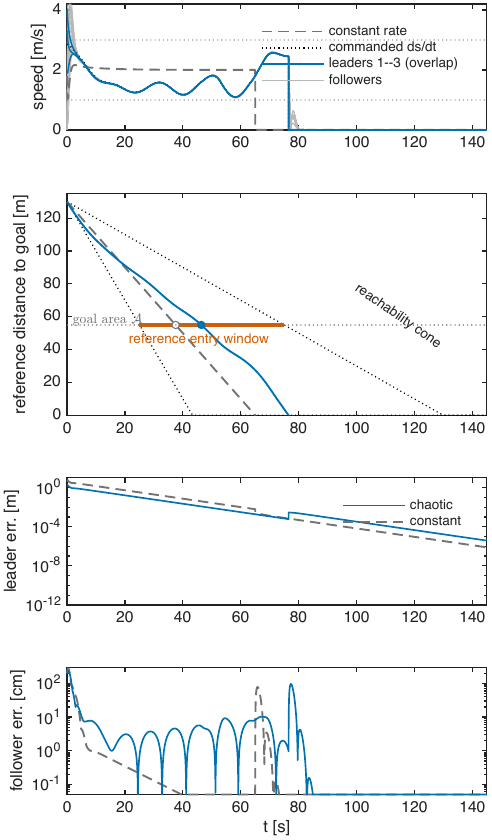}
    \caption{Representative chaotic and constant-rate transits, from
    different initial-error realizations (baseline dashed). From top:
    commanded rate $\dot s$ and agent speeds (the three leader speeds
    overlap: all apply the same broadcast $\dot s$,
    Theorem~\ref{thm:transit}(ii)); reference distance to the goal
    with the $v_{\min}$/$v_{\max}$ cone and the entry window
    \eqref{eq:windows}; leader error $\max_i\|\tilde p_i\|$; follower
    error $\max_{i,\ell}\|e_{i\ell}\|$. The rate
    band applies to the reference before arrival. Error panels are
    logarithmic with display floors $10^{-12}$~m and $0.05$~cm.}
    \label{fig:transit}
\end{figure}

\begin{figure}[!t]
    \centering
    \includegraphics[width=\linewidth]{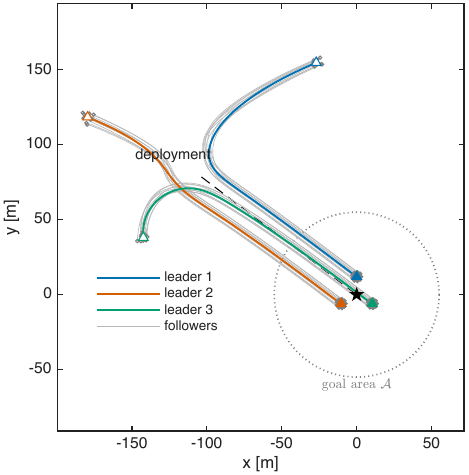}
    \caption{Mission-plane view of the separate large-displacement
    convergence demonstration: leaders in color, followers in gray,
    route dashed, and destination boundary dotted. Open and filled
    markers denote initial and final positions.}
    \label{fig:swarm2d}
\end{figure}

\begin{figure}[!t]
    \centering
    \includegraphics[width=\linewidth]{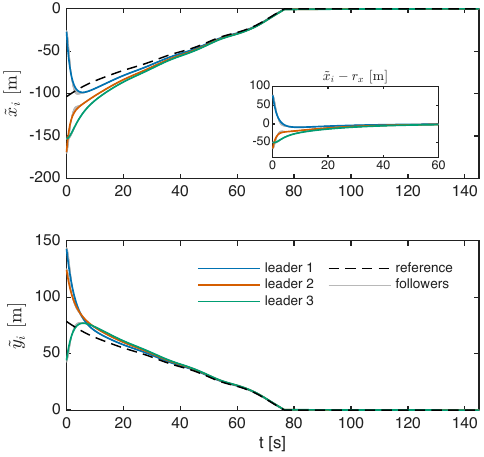}
    \caption{The flight of Fig.~\ref{fig:swarm2d} after subtracting
    the prescribed formation offsets. Leaders are in color, followers
    in gray, and the reference is dashed. The inset shows deviations
    of the compensated horizontal coordinate from the reference over
    the first $60$~s.}
    \label{fig:poscons}
\end{figure}

\begin{figure}[!t]
    \centering
    \includegraphics[width=\linewidth]{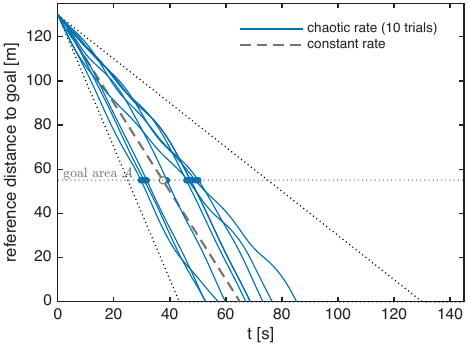}
    \caption{Reference distance to the goal for the ten chaotic-rate
    trials (solid) and the constant-rate transit (dashed), oscillator
    states sampled with replacement. The dotted diagonals are the
    extreme constant-rate transits at $v_{\min}$ and $v_{\max}$; the
    times at which they cross $R_A$ and $0$ are the endpoints of the
    windows \eqref{eq:windows}. Markers indicate reference entry into
    $\mathcal{A}$: chaotic filled, constant-rate open.}
    \label{fig:ensemble}
\end{figure}

Fig.~\ref{fig:ensemble} summarizes the ten trials. Observed rates span
$[1.10,2.90]$~m/s; area entries range from $29.7$ to $50.2$~s and
arrivals from $52.6$ to $85.2$~s, all inside \eqref{eq:windows}. The
spread comes from the different durations of the fast and slow
intervals. The ten transits contain nine distinct reference
histories, as expected from sampling the state bank with replacement;
ten runs illustrate the spread but do not estimate an arrival-time
distribution.

The first time at which every agent is inside $\mathcal{A}$ ranges
from $33.5$ to $59.7$~s for the chaotic rate and from $42.6$ to
$44.8$~s for the constant rate, and no agent leaves afterwards within
the $200$~s horizon. These are measured times, shifted from the
reference windows by offset orientation and residual tracking error
(Remark~\ref{rem:scope}).

For the prediction comparison a constant-velocity predictor is used
whose velocity is the least-squares slope of the noiseless leader-1
position over the preceding $2$~s and whose forecast is the current
position advanced by that velocity over the horizon, issued every
$0.5$~s from $t_0=5$~s up to $T_g-h$
for horizon $h\in\{5,10,20\}$~s, so that the fit and the forecast both
lie inside the transit of the run being scored. The Euclidean
prediction errors are combined into an RMSE per trial, and the ten
trial values are averaged. The mean RMSE is $1.1$, $3.3$, and $7.6$~m
at $5$, $10$, and $20$~s for the chaotic rate (largest single-trial
value $10.3$~m at $20$~s) against $0.05$, $0.13$, and $0.37$~m for
the constant rate, ratios of about $22$, $25$, and $21$. The baseline
errors are not zero because the decaying placement transient, with
slowest time constant $1/\lambda_L\approx10.1$~s, still moves the
leader when prediction starts at $5$~s. These ratios are specific to
the constant-velocity predictor with these initial errors and scoring
windows. Once the leader corrections have decayed and the saturation
is inactive, the along-route speed is affine in $\xi_1$, so an
observer who knows the oscillator and its state could predict
differently; such estimators are not tested.

\section{Conclusion}
\label{sec:conclusion}
An onboard Chua-based timing rule can pace a prescribed straight route
while the leaders use ordinary consensus and the followers ordinary
spring-damper control. Saturation gives a positive speed band and
explicit reference transit-time bounds, the broadcast rate feedforward
makes the leader error independent of the timing signal, and the
stable follower cascade gives bounded formation errors, exponential
convergence after the ideal stop, and eventual containment of the
whole swarm when the final geometry fits inside the destination area.
In simulation the arrival times vary from run to run, the
constant-velocity predictor error grows by more than an order of
magnitude, and the followers, which receive no rate broadcast and
track the leaders through springs and dampers alone, carry a larger
but bounded tracking error than under a constant rate. The
benefit is independence from an external reference stream; the
internal network, localization, and the generating leader are still
required, and finite acceleration and braking, command limits, and
separation during formation acquisition must be verified before
flight.

\bibliographystyle{IEEEtran}
\bibliography{references}
\end{document}